\tikzstyle{overbrace text style}=[font=\tiny, above, pos=.5, yshift=5pt]
\tikzstyle{overbrace style}=[decorate,decoration={brace,raise=5pt,amplitude=3pt}]
\definecolor{cadmiumgreen}{rgb}{0.0, 0.42, 0.24}
\newtheorem{theorem}{Theorem}[section]
\theoremstyle{definition}
\newtheorem{definition}[theorem]{Definition}
\newcommand{\bv}{\mathbf{v}}
\newcommand{\bmu}{\boldsymbol{\mu}}
\newcommand{\SW}{\text{\normalfont SW}}
\newcommand{\OPT}{\text{\normalfont OPT}}
\newcommand{\ALG}{\text{\normalfont ALG}}
\title{\bf The Price of EF1 for Few Agents with Additive Ternary Valuations}
\author{Maria Kyropoulou and Alexandros A. Voudouris}
\date{School of Computer Science and Electronic Engineering \\ University of Essex, Colchester, UK}
\begin{document}

\maketitle

\begin{abstract}
We consider a resource allocation problem with agents that have additive ternary valuations for a set of indivisible items, and bound the price of envy-free up to one item (EF1) allocations. For a large number $n$ of agents, we show a lower bound of $\Omega(\sqrt{n})$, implying that the price of EF1 is no better than when the agents have general subadditive valuations. We then focus on instances with few agents and show that the price of EF1 is $12/11$ for $n=2$, and between $1.2$ and $1.256$ for $n=3$.
\end{abstract}

\textbf{Keywords:} Fair division; EF1 allocations; Price of EF1

\section{Introduction}
The problem of fairly allocating scarce resources to agents with heterogeneous preferences has been a topic of interest since ancient times (e.g. in the form of land division) and has received significant attention within multiple disciplines in the last century; see \citep{survey,Moulin2019,Suksompong2021,Walsh2020} for some recent surveys. The main objective in this area is to compute or show the existence of fair allocations under assumptions about how fairness is defined exactly, the preferences of the agents, and the nature of the resources which might be divisible or indivisible, as well as desirable (goods) or undesirable (chores). In addition to fairness, another important desideratum is efficiency, i.e., to (approximately) optimize a function of the values or costs of the agents for the resources. Following a significant body of recent work, we consider a fair resource allocation setting with indivisible items, and aim to quantify the loss of efficiency in terms of social welfare (the total value of the agents for their assigned items) due to the requirement for fairness. 

\subsection{The General Model} \label{sec:model}
We consider a fair resource allocation setting with a set $N$ of $n \geq 2$ {\em agents} and a set $M$ of $m \geq 2$ {\em indivisible items}. Each agent $i$ has an {\em additive valuation function} $v_i:M \rightarrow \mathbb{R}_{\geq 0}$ such that $v_i(g) \geq 0$ is the value of agent $i$ for an item $g \in M$ (so, each item is considered a {\em good}), and $v_i(S) = \sum_{g \in S} v_i(g)$ is the value of $i$ for a subset of items $S \subseteq M$; also, each item is valued positively by at least one agent. So, an instance of the setting is fully described by the set of agents $N$, the set of items $M$, and the values of the agents for each individual items $\bv = (v_i)_{i \in N}$. 

An {\em allocation} $\mathbf{A} = (A_i)_{i \in N}$ is a collection of disjoint subsets (or, {\em bundles}) of $M$ so that agent $i$ is given the items in bundle $A_i$. Our objective is to compute an allocation that the agents consider to be fair. Many different notions of fairness have been proposed and studied in the fair division literature, with one of the most prominent ones being {\em envy-freeness}. 
This notion requires that no agent $i$ envies any other agent $j$ in the sense that the value of $i$ for $A_i$ is at least as much as the value of $i$ for $A_j$. However, it is not hard to observe that, when the items are indivisible, envy-free allocations may not exist; for example, if there are two agents and one item, then the agent that does not get the item will envy the other. In this work, we focus on probably the most well-studied relaxation of envy-freeness, known as {\em envy-freeness up to one item}, which was first introduced by \citet{Budish11} but was also informally studied before by \citet{LMMS04}. This notion allows an agent $i$ to envy another agent $j$ as long as this envy can be eliminated by (hypothetically) removing some item from $A_j$. 

\begin{definition}\label{def:EF1}
An allocation $\mathbf{A} = (A_i)_{i \in N}$ is {\em envy-free up to one item} (EF1) if, for any two agents $i$ and $j$ with $A_j \neq \varnothing$, there exists an item $g \in A_j$ such that $v_i(A_i) \geq v_i(A_j \setminus g)$. 
\end{definition}

EF1 allocations are known to always exist and can be computed via a variety of different methods. For example, the {\sc Round-Robin} algorithm allows the agents to choose their most valuable item one after the other in rounds determined by an (arbitrary) ordering of them. It is not hard to see that this algorithm terminates with an EF1 allocation: When an agent $i$ picks an item, she does so because she has more value for it than for any other item that agents picking after $i$ can choose. Hence, $i$ can only envy another agent $j$ because of a single item that $j$ picked before $i$ was able to pick any item in the very first round, and thus $i$'s envy can be eliminated by removing this item~\citep{Caragian2019MNW}. For other methods for computing EF1 allocations (possibly in combination with other constraints), we refer the reader to the survey of \citet{survey}.  

The {\em social welfare} $\SW(\mathbf{A})$ of an allocation $\mathbf{A}$ is the total value that the agents have for the sets of items they are allocated. Formally,  
\begin{align*}
    \SW(\mathbf{A}) = \sum_{i \in N} v_i(A_i). 
\end{align*}
To be able to meaningfully compare different allocations in terms of their social welfare, we assume that the valuation functions are normalized such that all agents have the same total value for all items: $v_i(M) = v_j(M)$ for any agents $i$ and $j$. 
With few exceptions, this is a standard normalization assumption in the relevant literature and is motivated, for example, by applications where all agents are given the same number of points (say 100) and are asked to distribute them among the items to indicate their preferences; see \citep{aziz19justifications} for further examples and justifications of this and many other normalization assumptions. It is also worth noting that, due to the nature of the EF1 criterion (which compares the value of each independent agent for different bundles of items), the set of possible EF1 allocations is not affected by our normalization assumption.

Unfortunately, EF1 allocations do not necessarily maximize the social welfare. To give an example, suppose there are two agents with values $v_1=(0.5,0.5,0)$ and $v_2=(0.49,0.26,0.25)$ for three items. The allocation that maximizes the social welfare assigns each item to the agent that has maximum value for it; that is, agent $1$ gets the first two items and agent $2$ gets the last item, resulting in a social welfare of $1.25$. However, observe that this allocation is not EF1: Agent $2$ has value $0.75$ for the first two items (which are given to agent $1$), and her value for each of the two items is strictly larger than the value she has for the last item that she is given. In any EF1 allocation, agent 2 has to be allocated one of the first two items that agent 1 values positively, and so the maximum social welfare in an EF1 allocation cannot be more than $1.24$. In fact, in general, it is NP-hard to compute the EF1 allocation with maximum social welfare~\citep{Bu2022complexity}. We are interested in quantifying how low the social welfare of any EF1 allocation can be compared to the optimal social welfare. This loss of efficiency is measured by the {\em price of EF1}, defined as the worst-case ratio (over all possible instances) between the maximum possible social welfare achieved by any allocation and the maximum possible social welfare achieved by an EF1 allocation. 
 
\begin{definition}\label{def:PoEF1}
Let $\texttt{EF1}(I)$ be the set of all possible EF1 allocations for instance $I=(N,M,(v_i)_{i \in N})$. 
The {\em price of EF1} is 
\begin{align*}
    \sup_{I} \frac{\max_{\mathbf{A}} \SW(\mathbf{A})}{\max_{\mathbf{A} \in \texttt{EF1}(I)} \SW(\mathbf{A})}.
\end{align*}
\end{definition}

The price of EF1 was first considered in the (conference version of the) work of \citet{Bei2021poef1}. 
For any number of agents $n$ with additive valuations, they showed that the price of EF1 is between $\Omega(\sqrt{n})$ and $O(n)$, with the upper bound achieved by {\sc Round-Robin}. This gap was resolved by \citet{Barman2020optimal} who designed an algorithm that computes an EF1 allocation with price $O(\sqrt{n})$, which holds even for subadditive valuation functions. \citeauthor{Bei2021poef1} also considered the special case of $n=2$ agents, for which they showed that the price of EF1 is between $8/7 \approx 1.143$ and $2/\sqrt{3} \approx 1.155$; a tight upper bound of $8/7$ for $n=2$ was recently shown by \citet{Li2024landscape}. 

\subsection{Our Contribution}
In this work, we revisit the price of EF1 allocations by focusing on instances in which the values of the agents for the items are {\em ternary} and can be of three different levels. In particular, the values of the agents may be $a$, $b$, or $0$ with $a > b > 0$. To give an example, consider an instance with two agents that have values $v_1=(2,1,1,0)$ and $v_2 = (1,1,1,1)$ for four items; here we have that $a=2$ and $b=1$, and also observe that the total value of both agents for all items is $4$. 
Such ternary values capture applications where agents can only partition the items into three different value categories to indicate their preferences. For instance, if the items correspond to houses, the categories could be houses the agents are eager to buy, houses they are willing to buy, and houses they do not want to buy at all. More generally, such instances can be thought of a special case of $k$-valued instances, where $k\geq 2$ values are available.

We first show that, even such restricted ternary valuations, the price of EF1 is $\Omega(\sqrt{n})$ when there is a large number of agents. This slightly improves the lower bound of $\Omega(\sqrt{n})$ shown by \citet{Bei2021poef1} in the sense that their result follows by an instance with four different values, whereas ours needs only three. Given that the algorithm of \citet{Barman2020optimal} computes an EF1 allocation for any class of subadditive valuation functions, our lower bound implies that no improvement on the price of EF1 is possible when there are many agents, even for ternary valuations. Hence, we next focus on instances with few agents, in particular, $n=2$ and $n=3$. 

For $n=2$ agents, we show that the price of EF1 is exactly $12/11$. The upper bound follows by a variation of the {\sc Round-Robin} algorithm that was discussed in Section~\ref{sec:model}. The main difference is that, when it is the turn of an agent to pick, she chooses an item among her most valuable ones by giving priority to those that do not harm the other agent. In a sense, our algorithm aims to minimize the social welfare loss as agents pick items greedily. For $n=3$ agents, we show that the price of EF1 is between $6/5=1.2$ and $1.256$. The upper bound in this case follows by an algorithm that constructs an allocation by repeatedly computing maximum matchings between the agents and the remaining items, again aiming to minimize the social welfare loss when there are multiple matchings that yield the same total value gain. 

\subsection{Other Related Work}
The price of EF1 is just an instantiation of a more general notion known as the {\em price of fairness}, which quantitatively compares the social welfare of allocations that satisfy particular fairness criteria to the maximum possible unrestricted social welfare. It was first introduced independently by \citet{Bertsimas2011pof} and \citet{Caragian2012efficiency}, who showed bounds on the price of envy-freeness and the price of proportionality for settings with divisible as well as indivisible items; for the latter, these results are only for classes of instances that admit such fair allocations (since envy-free and proportional allocations might not exist). 
Besides the price of EF1, \citet{Bei2021poef1}, \citet{Barman2020optimal} and \citet{Li2024landscape} also showed bounds on the price of many other fairness notions for instances with indivisible items, such as balancedness, Nash welfare, EFX for two agents, $1/2$-MMS, and EFM. Bounds on the price of fairness have also been shown for other criteria in settings with divisible items~\citep{Nicosia2017bounded},
chores~\citep{Aziz2024PropX,Heydrich2015chores,Wu2023WEF1}, and even for committee voting~\citep{Elkind2024representation}. 

Besides our paper, ternary values have also been explicitly considered in many other works within the fair division literature. For example, \citet{Amanatidis2017approximation} showed that exact MMS allocations exist and can be computed efficiently for the special case where  $a=2$ and $b=1$. On the negative side, \citet{fitzsimmons2024hardness} showed that approximately maximizing the Nash welfare or the egalitarian welfare remains NP-hard even for ternary values. \citet{Bhaskar2024trilean} recently considered instances with trilean values, where agents may have three different values for bundles of items (not necessarily singletons), and showed the existence of EF1 allocations. Instances with ternary values can also be thought of generalizations of bivalued instances  which have been studied significantly for notions like EFX and Maximum Nash welfare, see, for example~\citep{Amanatidis2021MNW,Babaioff2021dichotomous,Garg2023few,Halpern2020binary}.

\section{A Lower Bound for $n$ Agents}
We start by proving a lower bound of $\Omega(\sqrt{n})$ for when there is a large number $n$ of agents. As already mentioned, in combination with the upper bound of $O(\sqrt{n})$ of \citet{Barman2020optimal}, this lower bound shows that one cannot hope for asymptotically better price of EF1, even when the valuations are very simple and of just three levels. 

\begin{theorem}
For $n \geq 4$ agents with additive ternary valuations, the price of EF1 is $\Omega(\sqrt{n})$.
\end{theorem}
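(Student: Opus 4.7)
The plan is to construct a family of ternary instances whose price of EF1 grows like $\sqrt{n}$. For cleanliness I assume $n = k^2$ is a perfect square; the general case follows by a small rounding adjustment. Take $m = n$ items partitioned into $k$ disjoint groups $G_1, \ldots, G_k$ of size $k$ each, and introduce two types of agents: $k$ \emph{big} agents (with big agent $i$ valuing every item of $G_i$ at $a$ and every other item at $0$) and $n - k$ \emph{small} agents (each of whom values every item at $b$). Setting $a = k b$ makes $v_i(M) = k a = k^2 b = m b$ uniform across agents, so the normalization from Section~\ref{sec:model} is satisfied and the valuations only use the three levels $\{0, b, a\}$.

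Since $a > b$, the optimal allocation assigns each group $G_i$ entirely to big agent $i$, yielding $\OPT = k \cdot k a = k^3 b$. The technical core of the proof is then to upper bound the best EF1 welfare by $(k^2 + n - k) b$. The key observation is that, since every small agent values every item at $b$, EF1 for a small agent $j$ collapses to a cardinality-style inequality: for each small agent $j$ and each other agent $i$, $|A_j| \ge |A_i| - 1$. Combining this with $m = n$ and a short case analysis on bundle sizes, one shows that in any EF1 allocation the total number of items held by big agents is at most $k$, so the big agents contribute at most $ka$ to the social welfare in aggregate (and any asymmetric allocation in which some big agent holds two items of her group is paid for by another big agent holding an empty bundle, so the big-agent totals cancel). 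Consequently the small agents hold at least $n - k$ items and contribute at most $(n - k) b$, giving
\[
\max_{\mathbf{A} \in \texttt{EF1}(I)} \SW(\mathbf{A}) \le k a + (n - k) b = (k^2 + n - k)\, b.
\]

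Dividing by $\OPT$, the price of EF1 is at least $k^3 b / ((k^2 + n - k) b) = nk/(n + k(k-1))$, which for $k = \sqrt{n}$ is $\Theta(\sqrt{n})$. The main obstacle I anticipate is the case analysis showing that the big-agent total cannot exceed $ka$ regardless of asymmetry: it is tempting to argue that bundle sizes must be balanced, but the correct statement is subtler because a big agent with an empty bundle does not herself impose any envy constraints and so can legitimately ``absorb'' the imbalance. The analysis must therefore verify that any big agent holding $s \ge 2$ items of her own group forces enough other bigs to hold $0$ items that the welfare gain is erased, while simultaneously keeping the small-agent cardinality inequality $|A_j| \ge s - 1$ feasible under the global item budget $m = n$. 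Once this is in hand, the bound on the EF1 welfare and the resulting $\Omega(\sqrt{n})$ price are immediate.
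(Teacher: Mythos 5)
Your construction is exactly the paper's instance ($k=\sqrt{n}$ ``big'' agents each valuing their own group of $k$ items at $a=kb$, and $n-k$ ``small'' agents valuing every item at $b$), and your bound $\max_{\mathbf{A}\in\texttt{EF1}(I)}\SW(\mathbf{A})\leq ka+(n-k)b$ and resulting ratio match the paper's, so this is essentially the same proof. The ``case analysis'' you anticipate is just a one-line pigeonhole, as in the paper: if a small agent held no items, some agent would hold at least two, and the small agent (valuing every item at $b>0$) could not be EF1 toward her; hence every small agent holds an item, the big agents hold at most $k$ items in total (each worth at most $a$), and no cancellation argument among big agents with empty bundles is needed.
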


\begin{proof}
Consider an instance with $n$ agents and $n$ items. To define the values of the agents, we partition the items into a collection $S$ of $\sqrt{n}$ sets $S_1,\ldots,S_{\sqrt{n}}$ of size $\sqrt{n}$. Each set $S_i$ is associated with a different agent who has value $a$ for the items in $S_i$ and $0$ for any other item. Let $X$ be the set of agents associated with the sets of collection $S$. The remaining $n-\sqrt{n}$ agents have value $b$ for all items. Let $Y = N\setminus X$. Due to normalization, it must be the case that $a \cdot \sqrt{n} = b \cdot n \Leftrightarrow a = b \cdot \sqrt{n}$.

In any allocation where an agent $i \in Y$ is not assigned any items, there must exist another agent $j$ that is assigned at least two items, and hence $i$ is not EF1 towards $j$. This implies that each agent of $Y$ must be assigned at least one item for the allocation to be EF1. 
Hence, the maximum possible social welfare of an EF1 allocation is 
\begin{align*}
    b \cdot (n-\sqrt{n}) + a \cdot \sqrt{n} = b \cdot (2n - \sqrt{n})
\end{align*}
On the other hand, the optimal social welfare is $a \cdot n = b \cdot n\sqrt{n}$ and is achieved by assigning all the items to the agents that have value $a$ for them. Hence, the price of EF1 is 
\begin{align*}
    \frac{n\sqrt{n}}{2n-\sqrt{n}}. 
\end{align*}
It is not hard to observe that this expression is $\Omega(\sqrt{n})$, which completes the proof. 
\end{proof}

\section{Two Agents} \label{sec:two}
We start with the case of two agents for which we show a tight bound of $12/11$ on the price of EF1. 
We first show the lower bound. 

\begin{theorem}
For two agents with additive ternary valuations, the price of EF1 is at least $12/11$. 
\end{theorem}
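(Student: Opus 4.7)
The plan is to exhibit a concrete four-item ternary instance that attains the ratio $12/11$. Take $a = 3$, $b = 2$ (so $a > b > 0$), and set $v_1 = (a, a, a, 0) = (3,3,3,0)$ and $v_2 = (b, b, b, a) = (2,2,2,3)$. All values lie in $\{0, b, a\}$, and both agents have the same total value $v_1(M) = v_2(M) = 9$, so the instance satisfies the ternary and normalization assumptions.

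I would first identify the unique welfare-maximizing allocation and show it fails EF1. Since every item $g$ has $\max(v_1(g), v_2(g)) = a = 3$, with agent $1$ attaining this on items $1, 2, 3$ and agent $2$ on item $4$, the only allocation attaining the optimum $\SW = 4a = 12$ is $A_1 = \{1,2,3\}$, $A_2 = \{4\}$. For this allocation $v_2(A_2) = a = 3$ while $v_2(A_1) = 3b = 6$, and removing any single item from $A_1$ still leaves value $2b = 4 > 3$; hence agent $2$'s envy is not eliminated by a single item, so this allocation is not EF1.

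Next, I would bound the best EF1 social welfare by $11$ from above and then exhibit a matching EF1 allocation. Since the unique optimum is not EF1, every EF1 allocation differs from it on at least one item: sending any item of $\{1,2,3\}$ to agent $2$ costs $a-b = 1$ in welfare, while sending item $4$ to agent $1$ costs $a = 3$, so each deviation has cost at least $1$ and every EF1 allocation has $\SW \leq 11$. To match this, I would take $A_1 = \{1, 2\}$, $A_2 = \{3, 4\}$ and check EF1 directly: agent $1$'s envy of $A_2$ vanishes after removing item $3$ (leaving $v_1 = 0$), and agent $2$ actually prefers her own bundle since $v_2(A_2) = 5 > 4 = v_2(A_1)$. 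This yields a social welfare of $11 = 3a + b$ and hence a ratio of $12/11 = 4a/(3a+b)$ on this instance. The main work is identifying the correct parameters $a, b$; thereafter the verification is a short direct check, with no real technical obstacle.
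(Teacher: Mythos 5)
Your proposal is correct and is essentially the paper's own proof: your instance $v_1=(3,3,3,0)$, $v_2=(2,2,2,3)$ is exactly the paper's instance $v_1=(3/2,3/2,3/2,0)$, $v_2=(1,1,1,3/2)$ scaled by a factor of $2$, and the argument (the unique welfare-optimal allocation fails EF1, so any EF1 allocation loses at least $a-b$, giving ratio $12/11$) matches the paper's reasoning. The extra step of exhibiting an EF1 allocation with welfare $11$ is harmless but not needed for the lower bound.
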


\begin{proof}
Consider an instance with two agents and four items. The values of the agents for the items are $v_1 = (3/2,3/2,3/2,0)$ and $v_2 = (1,1,1,3/2)$. If we allocate the three first items to agent $1$, then agent $2$ will only get the last item and will not be EF1 towards agent $1$. So, at least one of the first three items must be given to agent $2$ in any EF1 allocation, leading to social welfare of at most $3\cdot 3/2 + 1 = 11/2$. The optimal social welfare is $4 \cdot 3/2 = 12/2$ (achieved by assigning each item to the agent that values it the most), leading to a price of EF1 of at least $12/11$.  
\end{proof}

For the upper bound, we consider a variation of the {\sc Round-Robin} algorithm to which we refer as {\sc Modified-2Agents-Round-Robin} (M2RR); see Algorithm~\ref{alg:M2RR}. The agents take turns choosing their most-valuable(positively-valued) item from the pool of remaining ones, breaking ties in favor of the item(s) that will minimally impair the social welfare. More formally, the agents choose among the items for which they have maximum (positive) value, the one for which the other agent has minimal value. For ternary valuations in particular, an agent prefers items she values as $a$ over items she values as $b$, and when she has to choose between multiple same-valued items, she prioritizes items that the other agent values as $0$, over the ones the other agent values as $b$, over the ones the other agent values as $a$. If an agent has value $0$ for all remaining items, all these items are given to the other agent. 

\SetCommentSty{mycommfont}
\begin{algorithm}[ht]
\SetNoFillComment
\caption{\sc Modified-2Agents-Round-Robin (M2RR)}
\label{alg:M2RR}
Order the agents such that agent $1$ is the one with the most $a$s\;
\For{each $i \in [2]$}
{
    $A_i \gets \varnothing$\;
}
$P \gets M$\;
\While{ $P \neq \varnothing$}
{
    \For{each $i \in [2]$}
    {
        \If{ $v_i(x) = 0$ for each $x \in P$}
        {
            $A_{3-i} \gets A_{3-i} \cup P$\;
            $P \gets \varnothing$\; 
        }
        \Else 
        {
            $C \gets \arg\max_{x \in P}\{v_i(x)\}$\;
            $x_i \gets $ arbitrary item in $\arg\min_{x \in C}\{v_{3-j}(x)\}$\;
            $A_i \gets A_i \cup \{x_i\}$\;
            $P \gets P \setminus \{x_i\}$\;
        }
    }
}
\Return $\mathbf{A} = (A_1, A_2)$\; 
\end{algorithm}

\begin{theorem}
For two agents with additive ternary valuations, 
M2RR computes an allocation that is EF1 and has price of EF1 at most $12/11$. 
\end{theorem}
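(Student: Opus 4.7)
I would split the proof into two parts: establishing that M2RR produces an EF1 allocation, and bounding its social welfare from below by $\tfrac{11}{12}\OPT$.

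For the EF1 property, I would adapt the classical Round-Robin argument. Since each normal pick in M2RR is still a maximum-value item for the picking agent (the tie-breaking rule only refines the choice within the $\arg\max$), the standard round-by-round comparison carries over: in each round where both agents pick, agent~1 (who moves first) selects an item she values at least as much as any item agent~2 could choose, while agent~2 selects an item she values at least as much as agent~1's subsequent pick. This implies that agent~1 does not envy agent~2 at all, and agent~2's envy of agent~1 is bounded by agent~1's round-1 pick. The special-case branch, in which one agent values every remaining item at~$0$ and these items are transferred to the other agent, preserves EF1 since receiving $0$-valued items does not change the first agent's perceived value of the other agent's bundle.

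For the social welfare bound, I would track \emph{bad picks}: picks where the receiving agent values the item strictly less than the other agent does. Under ternary values, every bad pick is necessarily an $(a,b)$-item picked by agent~2 or a $(b,a)$-item picked by agent~1, each contributing a loss of exactly $a-b$; and special-case transfers are never bad because the recipient values every such item more than the agent who triggered the special case. Letting $B$ denote the total number of bad picks, we have $\OPT - \SW(\mathbf{A}) = B(a-b)$, so the desired ratio bound is equivalent to $\SW(\mathbf{A}) \geq 11B(a-b)$.

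The heart of the proof is showing this inequality via a structural analysis of when bad picks can occur. By the tie-breaking rule, a bad pick by agent~2 in round~$r$ is possible only when the pool after agent~1's round-$r$ pick contains no $(\cdot,a)$-items, no $(0,b)$-items, and no $(b,b)$-items (so that $(a,b)$ is genuinely agent~2's best option); an analogous condition constrains agent~1's bad picks. Combining this with the normalization $v_1(M)=v_2(M)$ and the ordering that agent~1 has the most $a$'s, I would argue that every bad pick is accompanied by enough $a$-valued good picks already contributing to $\SW(\mathbf{A})$ to yield $\SW(\mathbf{A}) \geq 11B(a-b)$. The main obstacle is a careful case analysis, organized around which agent makes the bad picks and the profile of $(a,a)$, $(a,b)$, $(b,a)$, and $(0,a)$ items present in the instance; the tightness of the $12/11$ bound, witnessed by the four-item lower-bound instance with $a = 3b/2$ (where three $a$-picks accompany one bad pick, giving exactly $3a+b = 11(a-b)$), should guide which sub-case is extremal and thus where the inequality is tight.
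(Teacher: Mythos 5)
Your EF1 argument is fine, and your accounting framework is sound: under ternary values every ``bad'' pick (a positively-valued pick where the picker values the item strictly less than the other agent) loses exactly $a-b$, special-case transfers lose nothing, so $\OPT-\SW(\mathbf{A})=B(a-b)$ and the claim reduces to $\SW(\mathbf{A})\ge 11B(a-b)$. This is essentially the same loss decomposition the paper uses (there phrased as: only items in $S_{a,b}\cup S_{b,a}$ can be misallocated, each costing $a-b$). However, the proposal stops exactly at the point where the real work begins. You write that you ``would argue'' that every bad pick is accompanied by enough $a$-valued good picks, and that the main obstacle is a careful case analysis --- but that case analysis \emph{is} the proof, and it is not routine, because the constant $12/11$ does not follow from the picking dynamics alone; it comes from combining them with the normalization $v_1(M)=v_2(M)$.

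Concretely, what is missing is the step where the structure of a worst-case instance is pinned down and the normalization is turned into a bound on $a/b$. In the paper this goes as follows: tie-breaking implies that while both $S_{a,b}$ and $S_{b,a}$ are nonempty no loss occurs, so (since agent $1$ has the most $a$'s) a worst case has $S_{b,a}=\varnothing$, all loss comes from $k:=|S_{a,b}|$ items, and there is exactly one further item $g$, which normalization forces to lie in $S_{0,a}\cup S_{0,b}$. Normalization then gives $a=\frac{k+1}{k}b$ or $a=\frac{k}{k-1}b$ depending on whether $v_2(g)=b$ or $v_2(g)=a$, and one checks $k\ge 3$ whenever any bad pick occurs; plugging these into the welfare of the alternating picks (agent $1$ taking $\lceil (k-1)/2\rceil$ of the remaining $S_{a,b}$ items) yields the two bounds $15/14$ and $12/11$. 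Without this quantitative link between $a-b$, $b$, and the item counts, the inequality $\SW(\mathbf{A})\ge 11B(a-b)$ is only asserted, not proved: for instance, nothing in your sketch rules out small-$k$ configurations except the (unproven) claim that the four-item lower-bound instance is extremal. So the proposal is a correct and reasonable plan, but it has a genuine gap at its central step.
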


\begin{proof}
The allocation computed by M2RR is EF1 since {\sc Round-Robin} always outputs an EF1 allocation, and if an agent stops receiving items, it means that her value is $0$ for each of the remaining ones, and thus does not have any envy for them. We next focus on bounding the price of EF1. 

Fix the values $a$ and $b$ that the agents might have for the items. To describe the universe of all possible instances with $a$ and $b$ fixed, let $S_{x,y}$ be the set of items for which agent $1$ has value $x \in \{a,b,0\}$ and agent $2$ has value $y \in \{a,b,0\}$. In addition, let $\ALG(x,y)$ and $\OPT(x,y)$ be the contribution to the social welfare of items in $S_{x,y}$ in the computed EF1 allocation and in the optimal allocation, respectively.
Observe that for valid pairs $(x,y)$ with $x=y$ or $x=0$ or $y=0$, it holds that $\ALG(x,y)=\OPT(x,y)$. Using this, since the price of EF1 is the ratio of the optimal social welfare over the social welfare of the computed EF1 allocation, we can assume that, in a worst-case instance (where the price of EF1 is maximized), there is a minimal number of such items. In particular, since some agent must have value $0$ for some item (as otherwise the price of EF1 would be $1$), we have that 
$S_{a,a} \cup S_{b,b} = \varnothing$ 
and $S_{a,0} \cup S_{0,a} \cup S_{b,0} \cup S_{0,b} = \{g\}$.

Any inefficiency in social welfare comes from incorrectly allocating items in $S_{a,b} \cup S_{b,a}$. For such items, observe that, as long as there are available items in both sets, since the agents aim to minimize the value loss of the other agent when choosing items, agent $1$ picks from $S_{a,b}$ and agent $2$ picks from $S_{b,a}$ leading, again, to the same contribution in the social welfare as in the optimal solution (since both agents pick $a$-valued items). So, the loss of efficiency is due to one of these sets being empty. 
By the definition of the algorithm, since agent $1$ is the one with more $a$s in her valuation, it must be the case that $S_{b,a} = \varnothing$, and hence, the agents pick from the items in $S_{a,b}$; let $k := |S_{a,b}|$. Since $a > b$, it must be the case that $g \in S_{0,a} \cup S_{0,b}$ (as otherwise, agent $1$ would have a total value either $(k+1)\cdot a$ if $S_{a,0} \neq \varnothing$ or $k\cdot a + b$ if $S_{b,0} \neq \varnothing$, while agent $2$ would only have value $k\cdot b$, breaking the normalization assumption).  

Now that we have the structure of a worst-case instance, we are ready to bound the price of EF1. 
By the definition of the algorithm, since agent $1$ values $g$ as $0$, agent $1$ gets an item from $S_{a,b}$, agent $2$ gets $g$, and then they pick the remaining $k-1$ items of $S_{a,b}$ one by one with agent $1$ picking first (to lead to an EF1 allocation with best possible social welfare). 
Let $x$ and $y$ be the number of the $k-1$ items in $S_{a,b}$ that the two agents pick after the first round, respectively.
Then, the social welfare of the allocation computed by the algorithm is $a + v_2(g) + x\cdot a + y \cdot b$, while the optimal social welfare is $k\cdot a + v_2(g)$. We distinguish between the possible values of $v_2(g)$. 

\medskip
\noindent 
{\bf Case 1: $v_2(g) = b$.}
By the normalization, both agents must have the same total value for all items, and thus
$k \cdot a = (k+1)\cdot b \Leftrightarrow a = \frac{k+1}{k}\cdot b$. Since $x+y = k-1$, $x \geq \frac{k-1}{2}\geq y$, and $k \geq 3$ (since otherwise the $k$ items in $S_{a,b}$ and $g$ would all be allocated optimally by the algorithm), the price of EF1 is 
\begin{align*}
    \frac{k\cdot a + b }{(x+1)\cdot a +(y+1)\cdot b}
    &= \frac{k \cdot \frac{k+1}{k} + 1}{(x+1)\cdot \frac{k+1}{k} + (y+1)} \\
    &= \frac{k(k+2)}{k\cdot(x+y) + x+2k+1} \\
    &\leq \frac{2k^2 + 4k}{2k^2 +3k+1} \\
    &\leq \frac{15}{14}.
\end{align*}

\medskip
\noindent 
{\bf Case 2: $v_2(g) = a$.}
By the normalization, we now have that $k \cdot a = k\cdot b + y \Leftrightarrow a = \frac{k}{k-1}\cdot b$. Since $x+y = k-1$, $x \geq \frac{k-1}{2}\geq y$, and $k \geq 3$, the price of EF1 is 
\begin{align*}
    \frac{(k+1)\cdot a }{(x+2)\cdot a +y\cdot b } 
    &= \frac{(k+1) \cdot \frac{k}{k-1}}{ (x+2) \cdot \frac{k}{k-1} + y} \\
    &= \frac{k(k+1)}{k(x+y) + 2k -y} \\
    &\leq \frac{2k^2+2k}{2k^2+k+1} \\
    &\leq \frac{12}{11}.
\end{align*}
This concludes the proof.
\end{proof}

\section{Three Agents}\label{sec:three}
For three agents, we show that the price of EF1 is between $1.2$ and $1.256$. We again start with the lower bound. 

\begin{theorem}
For three agents with additive ternary valuations, the price of EF1 is at least $6/5$.
\end{theorem}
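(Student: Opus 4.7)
The plan is to produce an explicit small instance achieving the claimed ratio and verify the bound on the maximum EF1 social welfare by case analysis. I would consider three agents and three items with $v_1=(2,2,0)$ and $v_2=v_3=(1,1,2)$; all values lie in $\{0,1,2\}$, each item is valued positively by some agent, and every agent has total value $4$, so the instance is a valid normalized ternary instance. The optimal social welfare is obtained by assigning each item to an agent who values it at $2$: items $1$ and $2$ to agent $1$ and item $3$ to (say) agent $2$, giving $\OPT=6$.

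Next I would show that every EF1 allocation has social welfare at most $5$. A first structural observation is that agents $2$ and $3$ value every item positively, so if either of them were empty-handed, EF1 would force every other bundle to have size at most one---impossible with three items. Hence agents $2$ and $3$ each hold at least one item, leaving two sub-cases depending on whether agent $1$ is also nonempty.

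If each agent receives exactly one item, then EF1 is automatic (every bundle is a singleton) and a brief enumeration of the six possible assignments shows that the maximum SW is $5$, attained precisely when agent $1$ receives one of items $1$ or $2$ and one of agents $2,3$ receives item $3$. If instead agent $1$ is empty, then EF1 towards each other agent $j$ demands that there exists $g \in A_j$ with $v_1(A_j\setminus g)\leq 0$, which forces each of $A_2, A_3$ to contain at most one of items $1,2$; the remaining distributions of $\{1,2,3\}$ are easy to enumerate and all yield SW at most $4$. Combining the two sub-cases gives $\max_{\mathbf{A}\in \texttt{EF1}(I)} \SW(\mathbf{A})\leq 5$, hence the price of EF1 is at least $6/5$. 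The only step requiring care is the empty-agent-$1$ case, but it reduces to a short finite check once the structural constraint on items $1,2$ is observed.
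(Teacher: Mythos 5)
Your proof is correct, and it follows the same general strategy as the paper: exhibit an explicit ternary instance and argue via EF1 constraints that every EF1 allocation loses a factor of $6/5$. The witness instance differs, though: the paper uses six items with $v_1=(2,2,2,0,0,0)$, $v_2=(0,0,0,2,2,2)$, $v_3=(1,1,1,1,1,1)$, where the obstruction is that the all-$b$ agent must receive at least two items; your three-item instance $v_1=(2,2,0)$, $v_2=v_3=(1,1,2)$ instead exploits that both agents who value everything positively must be non-empty, so one of the two items that only agent $1$ values at $a$ must be ``wasted'' on an agent who values it at $b$. Your instance is smaller and the case analysis (all-singleton allocations versus agent $1$ empty) is a genuinely finite check, including the observation that singleton bundles are automatically EF1 and that an empty agent $1$ forces each other bundle to contain at most one of items $1,2$; both constructions yield exactly the ratio $6/5$, so nothing is lost or gained in the bound itself, only in economy of the example.
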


\begin{proof}
Consider an instance with three agents and six items. The values of the agents for the items are
$v_1 = (2,2,2,0,0,0)$,
$v_2 = (0,0,0,2,2,2)$,
and 
$v_3 = (1,1,1,1,1,1)$.
Clearly, in any EF1 allocation, all agents must take at least one item. Also, observe that if we allocate just one item to agent $3$ and all three items with value $2$ to agent $1$ or $2$, then $3$ will not be EF1 towards that agent. Hence, in an EF1 allocation, agent $3$ must take at least two items, and the social welfare is at most $4\cdot 2 + 2 = 10$. The optimal social welfare is $6\cdot 2 = 12$ (achieved by assigning the items to the first two agents), leading to a price of EF1 of at least $6/5$. 
\end{proof}

For the upper bound, we consider an algorithm to which we refer as {\sc Repeated-Max-Matching} (RMM); see Algorithm~\ref{alg:RMM}. 
The algorithm constructs an allocation in rounds by repeatedly computing max matchings on a bipartite graph with the nodes on one side representing the agents and the nodes on the other side representing the items. An edge between an agent and an item exists in this graph if and only if the corresponding value of the agent is positive (either $a$ or $b$), and has weight equal to this value. Each time a maximum matching is computed, the graph is updated so that the items in the matching and all of their adjacent edges are removed. In addition, if an agent has no adjacent edges left in the graph, she is also removed as she has value $0$ for all of the remaining items. 

In case there are multiple max matchings, the algorithm chooses a {\em non-wasteful} one in the sense that each agent is matched to an  item that minimizes the maximum value that the other two agents have for it. To illustrate this, suppose that agent $i$ is matched to an item $g$ which $i$ and another agent $j$ both value as $a$. Then, there must be no unmatched item $q$ such that $i$ values as $a$ and both other agents value as $b$ or $0$; otherwise, agent $i$ would have been matched to $q$ instead, possibly allowing $g$ to be matched to $j$. 

\SetCommentSty{mycommfont}
\begin{algorithm}[ht]
\SetNoFillComment
\caption{\sc Repeated-Max-Matching (RMM)}
\label{alg:RMM}
\For{each $i \in N$}
{
    $A_i \gets \varnothing$\;
}

Construct bipartite graph $G=(N,M,E,\bv)$ with $(i,g) \in E$ iff $v_i(g) > 0$\;

\While{ $M \neq \varnothing$}
{
    Compute a non-wasteful matching $\bmu = (\mu_i)_{i \in N}$ on $G$\;
    \For{each $i \in N$}
    {
        $A_i \gets A_i \cup \{\mu_i\}$\;
        $M \gets M \setminus \{\mu_i\}$\;
        $E \gets E \setminus \{e\in E: \mu_i \in e\}$\;
        \If{$\nexists e\in E: i \in e$}
        {
            $N \gets N \setminus \{i\}$\;
        }
    }
}
\Return $\mathbf{A} = (A_i)_{i \in N}$\; 
\end{algorithm}

\begin{theorem}
For three agents with additive ternary valuations, RMM computes an allocation that is EF1 and has price of EF1 at most $24/19 \approx 1.26$.
\end{theorem}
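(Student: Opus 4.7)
The plan is to prove EF1 and the price bound separately.

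\emph{EF1.} I would use a one-step matching-swap argument. Fix two agents $i,j$ and let $T$ be the number of rounds in which both are still active. For each $r < T$, the item $\mu_j^{r+1}$ is present in round $r$'s graph as an unmatched vertex, so a max-weight matching in round $r$ must satisfy $v_i(\mu_i^r) \geq v_i(\mu_j^{r+1})$; otherwise, reassigning $i$'s match to $\mu_j^{r+1}$ would strictly increase the matching weight (the ``freed'' item $\mu_i^r$ can simply be left unmatched without affecting other agents). Pairing these inequalities telescopically across rounds yields $v_i(A_i) \geq v_i(A_j) - v_i(\mu_j^1)$, so removing $\mu_j^1$ witnesses EF1 for $i$ with respect to $j$. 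Agents that drop out have value $0$ for every remaining item, so they acquire no further envy after their departure, which handles asymmetric bundle sizes.

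\emph{Price of EF1.} For the upper bound, I would classify items by their value profile $(v_1(g),v_2(g),v_3(g)) \in \{0,a,b\}^3$, letting $n_{xyz}$ denote the number of items with profile $(x,y,z)$. Then $\OPT = \sum_g \max_i v_i(g)$ is a fixed linear function of the $n_{xyz}$ and of $a,b$, and normalization ties the $n_{xyz}$ across agents via $v_1(M)=v_2(M)=v_3(M)$. For $\ALG$, the crucial observation is that a non-wasteful max matching forces each matched agent to receive an item of her current maximum available value while avoiding items that another agent values highly whenever possible. Hence welfare loss relative to $\OPT$ only accrues when a round contains a ``contested'' $a$-valued item wanted by multiple agents; in such a round, at most one contesting agent receives an $a$-valued item from this pool, and the other(s) lose $a-b$ value (or, in degenerate sub-cases, $a$). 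I would quantify this loss round by round, leveraging the non-wastefulness rule to argue that a lossy round for one agent immediately constrains which profiles can remain available for subsequent rounds.

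\emph{Main obstacle.} The main difficulty is the case analysis. Unlike the two-agent setting, three agents admit many more item profiles and many more cross-round interactions, and non-wastefulness is only a one-round guarantee. I would therefore encode the worst case as a small linear program in the variables $n_{xyz}$: constraints capturing normalization and the non-wasteful restrictions (which forbid particular co-occurrences of matched profiles in the same round), with objective $\OPT/\ALG$. The target bound $24/19$ should drop out of this program, with the tight (or near-tight) instance plausibly being a symmetric configuration involving a few items of profiles $(a,a,0)$ and $(a,b,b)$ up to permutation, mirroring the structure of the lower-bound instance. The trickiest step I anticipate is showing that an adversarial instance cannot stack many small losses across consecutive rounds: I would argue that two successive lossy rounds would either violate the non-wasteful tie-breaking rule (since a swap across rounds would improve later matchings) or violate the normalization constraint, thereby capping the cumulative loss at the $5/24$ fraction of $\OPT$ needed for the claim.
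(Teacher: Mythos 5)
Your EF1 argument is sound and, if anything, more rigorous than the paper's own: the exchange inequality $v_i(\mu_i^r) \geq v_i(\mu_j^{r+1})$ (valid because $\mu_j^{r+1}$ is still unmatched in round $r$, so redirecting $i$ to it would strictly increase the weight of a maximum-weight matching), together with the observation that an agent who is ever left unmatched while still having positive-valued items loses at most one more relevant item per opponent and is then removed, does telescope to $v_i(A_i) \geq v_i(A_j \setminus \mu_j^1)$. The paper instead tracks an agent's trajectory ($a$-rounds, then $b$-rounds, then exit) and bounds the envy directly, but the two arguments are essentially equivalent in spirit.

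The price-of-EF1 half, however, is a plan rather than a proof, and the missing content is exactly the heart of the theorem. First, $\ALG$ is not a function of the profile counts $n_{xyz}$ alone: it depends on which matchings the algorithm forms and in which order (e.g., how many rounds pair an $a$-item for agent $1$ with $b$-items for agents $2$ and $3$ versus rounds where agent $2$ has already been eliminated). The paper handles this by a phase decomposition (showing that rounds where all matched items are obtained at their maximum value contribute no loss, so only the ``contested-$a$'' phase matters) and by introducing an explicit variable $z$ for the number of lossy-round types, bounded via $z \leq (2x-y)/6$; your proposed LP in the $n_{xyz}$ would need such additional round-structure variables and a worst-case argument over them, which you only gesture at as ``the main obstacle.'' Second, and more critically, you never carry out the quantitative use of normalization. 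Without it the bound is false in spirit: on a block of items all valued $a$ by one agent and $b$ by the others, the per-round loss analysis alone gives a ratio tending to $3$ as $a/b \to \infty$. The paper's proof hinges on converting $v_1(M)=v_2(M)=v_3(M)$ into explicit bounds $a \leq (1+w)b$ and $a \leq \frac{4+7w}{6w}b$ with $w = y/x$, and then optimizing the resulting one-variable expression, which is what produces the constant ($\approx 1.256 \leq 24/19$). Your claim that ``the target bound should drop out'' of an unspecified program, with a guessed tight configuration, leaves this entire derivation unestablished, so the stated bound is not proved by the proposal as written.
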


\begin{proof} 
We first argue that the allocation computed by the algorithm is EF1. Consider an arbitrary agent $i \in [3]$ and the items she is allocated as the algorithm computes max matchings in rounds. Suppose that agent $i$ is assigned items that she values as $a$ for the first $k-1$ rounds before being given an item $g$ that she values as $b$ in round $k \geq 1$; note that if $i$ receives only $a$-valued items and then she has $0$ for the remaining ones, then $i$ is envy-free (not just EF1) towards the other two agents.
It might happen that, in round $k$, one or both of the other two agents are allocated items that $i$ values as $a$ (which $i$ strictly prefers to the item she got in round $k$.)
Afterwards, suppose that agent $i$ is assigned items that she values as $b$ for the next $\lambda \geq 1$ rounds and then stops receiving items from round $k+\lambda+1$ and on. Now, it might happen that, in round $k+\lambda+1$, one or both of the other two agents are allocated items that $i$ values as $b$,
but from round $k+\lambda+2$ (if such rounds exist), agent $i$ definitely has value $0$ for the remaining items. Hence, $i$ has accumulated a total value of 
$(k-1)\cdot a + b + \lambda \cdot b = (k-1)\cdot a +(\lambda+1)\cdot b$ 
and might have a total value of at most 
$(k-1)\cdot a  + a + \lambda\cdot b + b = k \cdot a + (\lambda+1) \cdot b$ for the bundles of items of the other agents. By removing one of the $a$-valued items, we have that $i$ is EF1 towards the other agents, showing that the allocation is EF1.  

We next focus on proving the upper bound on the price of EF1. It will be helpful to partition the max matchings computed by the algorithm into a set of phases depending on the values of the agents for their assigned items. We remark that these phases do not consist of necessarily consecutive matchings computed as the algorithms runs.
\begin{itemize}
\item
Phase $1$: All agents receive items they value as $a$. 

\item
Phase $2$: Each item allocated in the matchings of this phase is valued $a$ by at least one agent. Note that no three of these items can be allocated to three different agents that value them as $a$, as then such a set of items would be part of Phase 1. 

\item
Phase $3$: All agents receive items they value as $b$, with the possible exception of the first matching in which there might be an item that some agent values as $a$, but at least one of the items is valued at most $b$ by all agents. 
\end{itemize}
For $p \in [3]$, let $M_p$ be the set of items in phase $p$. Denote by $\ALG(M_p)$ and $\OPT(M_p)$ the contribution of the items in $M_p$ in the social welfare of the computed EF1 allocation and of the optimal allocation, respectively. 

First observe that the social welfare contribution of the items in $M_1$ and $M_3$ is at least as much as their contribution in the optimal social welfare. Indeed, the value gain from each item in Phase $1$ as well as at most two items in the first matching of Phase $3$ is $a$, which is the maximum possible. The remaining items in Phase $3$ are allocated to agents that value them as $b$, but we also know that all agents have value at most $b$ for these items; hence, the contribution is again the maximum possible. So, we have that 
$\ALG(M_1) + \ALG(M_3) \geq \OPT(M_1) + \OPT(M_3)$; given this, we can thus assume that Phases $1$ and $3$ are empty in a worst-case instance.  

We now focus on the allocation of Phase $2$. 
Without loss of generality, suppose that agents $1$ and $2$ are the two agents with the largest and second largest numbers of $a$s, respectively, in their valuations for the items in $M_2$. In particular, let $x$ be the number of items that agent $1$ values as $a$, and let $y$ be the number of items that agent $2$ values as $a$ but agent $1$ values as at most $b$; by definition, $x \geq y$. We also have that $|M_2| = x+y$, as, otherwise, if there was an additional item which agent $3$ values as $a$ but the other two agents value as at most $b$, then there would be an additional matching that could be included in Phase $1$. Hence, since for each item in $M_2$ there is an agent that values it as $a$, we have that $\OPT(M_2) = (x+y)\cdot a$.

For the social welfare of the algorithm, observe that each of the first $y/2$ matchings yields a gain of $2a+b$ since both agents $1$ and $2$ obtain items they value as $a$. 
Afterwards, agent $2$ has no other item that she values as $a$, and hence the remaining $x-y/2$ items (all of which are valued as $a$ by agent $1$) might be allocated as follows: 
There is a number $z$ of matchings each of which allocates an $a$-valued item to agent $1$ and $b$-valued items to agents $2$ and $3$ for a gain of $a+2b$; since there can be at most $(x-y/2)/3$ such matchings, we have that $0 \leq z \leq (2x-y)/6$. Following these matchings, agent $2$ is left with only items that she values as $0$ and is thus removed. Hence, the remaining $x-y/2 - 3z$ items are then allocated to agents $1$ and $3$ during the final $(x-y/2-3z)/2$ matchings, each of which yields a gain of at least $a+b$. Consequently,
\begin{align*}
    \ALG(M_2) 
    &\geq \frac{y}{2}  \cdot (2a+b)+ z \cdot (a+2b) + \left(\frac{x- y/2 - 3z}{2}\right) (a+b) \\
    &= \frac{y}{2}  \cdot (2a+b) + \frac{x- y/2}{2} \cdot  (a+b) - z \cdot \frac{a-b}{2} \\
    &\geq \frac{y}{2}  \cdot (2a+b)+ \frac{2x- y}{6} \cdot (a+2b),
\end{align*}
since $z \leq (2x-y)/6$. 
Putting everything together, the price of EF1 can be upper-bounded as follows:
\begin{align}
\frac{\OPT(M_1)+\OPT(M_2)+\OPT(M_3}{\ALG(M_1)+\ALG(M_2)+\ALG(M_3)} 
&\leq \frac{\OPT(M_2)}{\ALG(M_2)} \nonumber \\
&\leq \frac{(x+y) a}{\frac{y}{2}(2a+b)+ \frac{2x- y}{6}\left(a+2b\right)} \nonumber \\
&=\frac{(1+\frac{y}{x}) a}{\frac{1}{2}\frac{y}{x}(2a+b)+ \left(\frac{1}{3}-\frac{1}{6}\frac{y}{x}\right)\left(a+2b\right)} \label{function}
\end{align}
Let $w=y/x \in [0,1]$. 
By the normalization of the valuations and the fact that agent 3 has value at most $(x+y)b$ for the items in phase $M_2$, we obtain the following two constraints:
\begin{itemize}
\item Since agent 1 has value $a$ for $x$ items, we have $x\cdot a\leq (x+y) \cdot b \Leftrightarrow a\leq (1+w)b$.
\item Since agent 2 has value $a$ for at least $y$ items and value $b$ for at least $z$ items, we have $y \cdot a + z \cdot b \leq (x+y) b \Leftrightarrow a\leq \frac{4+7w}{6w} b$, where the last inequality follows by the assumption that $z$ takes its maximum value of $(2x-y)/6$ (which minimizes the social welfare of the algorithm). 
\end{itemize}
Given that $a$ and $b$ must satisfy both inequalities, we can combine them into a single one:
\begin{align*}
    a \leq \min \left\{ 1+w, \frac{4+7w}{6w} \right\} \cdot b.
\end{align*}
Note that the minimum evaluates to $1+w$ when $w \leq \frac{1}{12}(1+\sqrt{97}) \approx 0.904$, and $\frac{4+7w}{6w}$ otherwise; it is worth noting that $w$ should actually be a rational number as the ratio of two integers $x$ and $y$, but, as we aim for an upper bound on the price of EF1, the irrational upper bound of $w$ is sufficient (although it does not lead to a tight analysis). Since \eqref{function} is increasing in $a$, we can upper-bound the price of EF1 by substituting $a$ with the right-hand side of the above inequality. In particular:
\begin{itemize}
\item If $w \leq \frac{1}{12}(1+\sqrt{97})$, we obtain 
\begin{align*}
\frac{(1+w)^2}{\frac{1}{2}w(3+2w)+ \left(\frac{1}{3}-\frac{1}{6}w\right)(3+w)}
= \frac{6(1+w)^2}{5w^2+8w+6}.
\end{align*}
This expression is an increasing function of $w$ in the interval $[0,\frac{1}{12}(1+\sqrt{97})]$ and has a maximum value of 
$\frac{3}{586}(137+11\sqrt{97}) \approx 1.256$. 

\item If $w \geq \frac{1}{12}(1+\sqrt{97})$, we obtain 
\begin{align*}
\frac{(1+w)\cdot \frac{4+7w}{6w}}{ \frac12 w \left(2 \frac{4+7w}{6w} +1 \right) + (\frac13 - \frac16 w) \left( \frac{4+7w}{6w} +2 \right)}
= \frac{6(1+w)(7w+4)}{41w^2+58w+8}.
\end{align*}
This expression is a decreasing function of $w$ in the interval $[\frac{1}{12}(1+\sqrt{97}),1]$ and has a maximum value of 
$\frac{3}{586}(137+11\sqrt{97}) \approx 1.256$, again. 
\end{itemize}
The proof is now complete. 
\end{proof}

\bibliographystyle{plainnat}
\bibliography{references}

\end{document}